\definecolor{darkred}{RGB}{164,74,63}
\definecolor{lightblue}{HTML}{F8766D}
\definecolor{darkgreen}{HTML}{00BA38}
\definecolor{darkyellow}{HTML}{619CFF}
\colorlet{shadecolor}{lightgray!25}
\newrobustcmd*{\mycircle}[1]{\tikz{
	\node[draw=darkred, double, circle] (0,0) {};
	}
}
\newrobustcmd*{\myline}[1]{\tikz{\filldraw[draw=#1, line width = 3pt] (0,0) -- (0.5cm, 0);}}
\newrobustcmd*{\myfline}[1]{\tikz{\filldraw[draw=#1, line width = 0.2mm] (0,0) -- (0.5cm, 0);}}
\newtheorem{thm}{Theorem}[section]
\newtheorem{prop}[thm]{Proposition}
\newtheorem{lemma}[thm]{Lemma}
\newtheorem{definition}[thm]{Definition}
\newtheorem{rem}[thm]{Remark}
\newtheorem{exa}[thm]{Example}
\newtheoremstyle{definitionsty}{3pt}{3pt}{\slshape}{}{\bfseries}{.}{.5em}{}
\theoremstyle{definitionsty}
\newtheorem{tdefn}[thm]{Definition}
\newcommand{\cl}[1]{\mathcal{#1}}
\newcommand{\bigO}[1]{\mathop{}\!O{\left(#1\right)}}
\newcommand{\R}{\mathbb{R}}
\newcommand{\tfdis}{\textit{sf-}\textrm{metric}}
\newcommand{\tfdisp}{\textit{sf-}\textrm{metrics}}
\newcommand{\fdis}[2]{\mathit{d_f}\left(#1,#2\right)}
\newcommand{\tdis}[2]{\mathit{l}\left(#1,#2\right)}
\newcommand{\sdis}[2]{\mathit{d}(#1,#2)}
\newcommand{\remt}{\Omega}
\newcommand{\ev}{\remt\times V}
\newcommand{\starr}[2]{\cl{T}_{#1#2}}
\newcommand{\pat}[2]{#1\rightsquigarrow #2}
\newcommand{\dac}{\mathrm{R}}
\newcommand{\cac}{\cl{R}}
\newcommand{\ssmdg}{\mathrm{SSMD}_{\gamma}}
\newcommand*{\quot}[2]%
{\ensuremath{%
    #1/\!\raisebox{-.65ex}{\ensuremath{#2}}}}
\def\hili{
	\leavevmode\rlap{
		\hbox to \hsize{
			\color{yellow!50}
			\leaders\hrule height .8\baselineskip depth .5ex\hfill
		}
	}
}
\newcommand{\trstri}{\textrm{reachability triple}{}}
\newcommand{\trstrip}{\textrm{reachability triples}{}}
\DeclarePairedDelimiterX{\abs}[1]{\lvert}{\rvert}{\setargs{#1}}
\DeclarePairedDelimiterX{\set}[1]{\{}{\}}{\setargs{#1}}
\NewDocumentCommand{\setargs}{>{\SplitArgument{1}{;}}m}
{\setargsaux#1}
\NewDocumentCommand{\setargsaux}{mm}
{\IfNoValueTF{#2}{#1} {#1\,\delimsize|\,\mathopen{}#2}}
\DeclarePairedDelimiterX{\lis}[1]{[}{]}{\lisargs{#1}}
\NewDocumentCommand{\lisargs}{>{\SplitArgument{1}{;}}m}
{\lisargsaux#1}
\NewDocumentCommand{\lisargsaux}{mm}
{\IfNoValueTF{#2}{#1} {#1\,\delimsize|\,\mathopen{}#2}}
\begin{document}

\title{\huge{On computing distances and latencies in Link Streams}}

\author{
	Frédéric Simard\\
	School of Electrical Engineering and Computer Science \\
	University of Ottawa \\
	Ottawa, ON, Canada\\
	email: fsima063@uottawa.ca
}

\maketitle

\begin{abstract}
Link Streams were proposed a few years ago as a model of temporal networks. We seek to understand the topological and temporal nature of those objects through efficiently computing the distances, latencies and lengths of \emph{shortest fastest} paths. We develop different algorithms to compute those values efficiently. Proofs of correctness for those methods are presented as well as bounds on their temporal complexities as functions of link stream parameters. One purpose of this study is to help develop algorithms to compute centrality functions on link streams such as the betweenness centrality and the closeness centrality.
\end{abstract}

\thanks{
	A short version of this text is set to be presented at the \emph{International Conference on Advances in Social Networks Analysis and Mining (ASONAM '19)}, in Vancouver, Canada \cite{Simard2019b}.
}


\section{Introduction}

Network science has been greatly influenced in recent years by the notion of temporal networks. Researchers in various fields have observed that real data varies over time and that static networks are insufficient to capture the full extent of some phenomenon. Different models of temporal networks have been suggested, among which the \emph{Link Streams} of Latapy et al. \cite{Latapy2016a} that captures the network evolution in continuous time. As is the case with other forms of networks, the notions of paths and distances are fundamental to the study of link streams. Kempe et al. \cite{Kempe2002} mention the use of time-respecting paths to study temporal networks. They further mention applications to epidemiology, in which one would seek information about the spread of a virus in a population. Human interactions can also be analyzed with temporal networks as has been observed by Tang et al. \cite{Tang2010a} and the link stream framework can help advance those studies. Although online social networks can be thought to vary in discrete time, with tweets and retweets for example, in real social networks the interactions have durations which are important to take into account in order to have an accurate description of the data. To see how link streams can be used in practice, many studies have emerged from the SocioPatterns Collaboration that includes datasets on face-to-face contacts \cite{cattuto2010dynamics,sociopatterns} with temporal labels. Those datasets are valuable tools to more accurately investigate aspects of social networks such as homophily \cite{Stehle2013604} and epidemics \cite{Moinet2018}.

Latapy et al. develop the notion of \emph{shortest fastest paths} in their link stream model as a new concept of paths that gather together the temporal as well as the structural information of a link stream. A shortest fastest path is one that is shortest among the fastest paths between two endpoints. This type of path is used to define a \emph{betweenness} centrality and it appears other centrality functions could be so defined as well. A social network can thus be analyzed through different perspectives: using the \emph{distance} to measure how the connectivity of a group varies over time, the \emph{latency} to measure how quickly an information can spread into a group of people and the length of a shortest fastest path to measure how efficiently this information is relayed. Note also how the time a shortest path starts and ends influences the information it can spread. 

We propose here to compute the metrics of shortest (fastest) paths in a link stream with different algorithms. General definitions are presented in \autoref{sec:background}, followed by a state of the art on \autoref{sec:relwork}. Then, we present our two main methods in \autoref{sec:algorithms}, experiments in \autoref{sec:exps} and we conclude in \autoref{sec:conc}.

\section{Background}\label{sec:background}

Most definitions are taken from Latapy et al. \cite{Latapy2016a}. A link stream $L$ is a tuple $L=(T,V,E)$ where $T\subseteq \R$ is a set of time instants, $V$ is a finite set of nodes (vertices) and $E\subseteq T\times V\otimes V$ is a set of links (edges). Here, $V\otimes V$ denotes the set of unordered pairs of vertices and we write $uv\in V\otimes V$. We say an element $(t_v,v)\in T\times V$ is a temporal vertex.

An edge of $E$ is a tuple $(t,uv)$. Given an interval $I\subseteq T$, we write $(I,uv)\subseteq E$, instead of $I\times \set{uv}\subseteq E$, to mean all edges $(t,uv)$ such that $t\in I$ are in $E$. We say an edge $(I,uv)\subseteq E$ is maximal if there exists no other edge $(J,uv)\subseteq E$ such that $I\subset J$. We say a maximal edge $([a,b],uv)\subseteq E$ starts on $a$, ends on $b$ and has duration $b-a$. We let $\remt$ be the set of \emph{event times} of $T$, that is $\remt:=\set{t\in T ; \exists \mbox{ maximal edge } ([t,t'],uv)\subseteq E \mbox{ or }([t',t],uv)\subseteq E}$. Elements of $\remt\times V$ are called \emph{event nodes}. We write $E_\remt := \set{(t,uv)\in E; t\in \remt}$.

A maximal edge, as well as $\remt$ and $\ev$ are illustrated on the link stream of \autoref{fig:ls_simpleex}. On this link stream, $([1,2],cb)\subset E$ is a maximal edge, whereas $([1,1.5],cb)\subset E$ is not. Thus, $\remt = \set{0,1,2,3}$.
\begin{figure}
\begin{center}
\scalebox{0.8}{\begin{tikzpicture}[auto, thick, transform shape]
\tikzstyle{every node} = [ellipse, minimum size = 0pt]
\draw (-1,2) node(v20) {$c$};
\draw (4,2) node(v21) {}; 
\draw (-1,0) node(v00) {$a$};
\draw (4,0) node(v01) {}; 
\draw (-1,3) node(v30) {$d$};
\draw (4,3) node(v31) {}; 
\draw (-1,1) node(v10) {$b$};
\draw (4,1) node(v11) {}; 
\path[draw,dashed] (0,2) -- (v21); 
\path[draw,dashed] (0,0) -- (v01); 
\path[draw,dashed] (0,3) -- (v31); 
\path[draw,dashed] (0,1) -- (v11); 

\draw (0,3) node(v1) [fill=black] {};
\draw (0,2) node(v2) [fill=black] {};
\draw (0,1) node(v) {};
\draw (0,0) node(v) {};
\draw (1,3) node(v) {};
\draw (1,2) node(v3p) [fill=black] {};
\draw (1,1) node(v2p) [fill=black] {};
\draw (1,0) node(v) {};
\draw (2,3) node(v) {};
\draw (2,2) node(v3) [fill=black] {};
\draw (2,1) node(v4) [fill=black] {};
\draw (2,0) node(v) {};
\draw (3,3) node(v) [fill=black] {};
\draw (3,1) node(v5) [fill=black] {};
\draw (3,0) node(v6) [fill=black] {};
\draw (3,3) node(v7) [fill=black] {};
\draw (3,2) node(v8) [fill=black] {};

\path[draw] (v1) edge [bend left] (v2);
\path[draw] (v5) edge [bend left] (v6);
\draw (v7) edge [bend left] (v8);

\draw (v2p) edge [bend right] coordinate[midway](m2p) (v3p);
\path[opacity=0] (v3) edge [bend left] coordinate[midway](m3p) (v4);
\path[draw] (m2p) -- (m3p);

\path[draw, line width=3pt, darkgreen] (v1) edge [bend left] (v2);
\path[draw, line width=3pt, dashed, darkgreen] (v2) edge (v3p);
\path[draw, line width=3pt, darkgreen] (v3p) edge [bend left] (v2p);
\path[draw, line width=3pt, dashed, darkgreen] (v2p) edge (v4);
\path[draw, line width=3pt, dashed, darkgreen] (v4) edge (v5);
\path[draw, line width=3pt, darkgreen] (v5) edge [bend left] (v6);

\path[draw, line width=1.5pt, lightblue] (v1) edge [bend left] (v2);
\path[draw, line width=1.5pt, dashed, lightblue] (v2) edge (v3);
\path[draw, line width=1.5pt, lightblue] (v3) edge [bend left] (v4);
\path[draw, line width=1.5pt, dashed, lightblue] (v4) edge (v5);
\path[draw, line width=1.5pt, lightblue] (v5) edge [bend left] (v6);

\node[draw=darkred, double, fit = (v1)] {};
\node[draw=darkred, double, fit = (v6)] {};

\draw[thick] (0,-0.95) -- (0,-1.05) node[below] {$0$};
\draw[thick] (1,-0.95) -- (1,-1.05) node[below] {$1$};
\draw[thick] (2,-0.95) -- (2,-1.05) node[below] {$2$};
\draw[thick] (3,-0.95) -- (3,-1.05) node[below] {$3$};
\draw[thick] (4,-0.95) -- (4,-1.05) node[below] {$4$};
\draw (-1,-1) node(ut) {};
\draw (5,-1) node[label={[label distance = 0.5mm]225:$t$}](vt) {};
\path[->,draw] (ut) -- (vt);
\end{tikzpicture}}
\caption{
	A simple link stream with maximal edge $([1,2],cb)$.
}
\label{fig:ls_simpleex}
\end{center}
\end{figure}
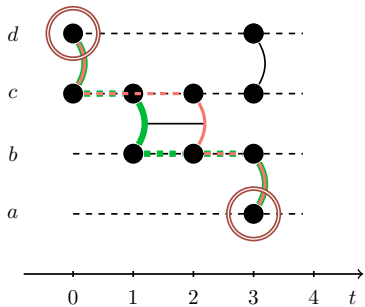

The graph $G_t$ induced by a time $t\in T$ is defined as $G_t = (V, \set{uv; (t,uv)\in E})$. In a link stream $L$, a path $P$ from $(\alpha,u)\in T\times V$ to $(\omega, v)\in T\times V$ is a sequence $(t_0,u_0,v_0), (t_1,u_1,v_1),\dots,(t_k,u_k,v_k)$ of elements of $T\times V\times V$ such that $u_0=u$, $v_k=v$, $t_0\geq \alpha$, $t_k\leq \omega$ and for all $i, t_i\leq t_{i+1}$, $v_i=u_{i+1}$ and $(t_i,u_iv_i)\in E$. We say that such a path starts at $t_0$, arrives at $t_k$, has length $k+1$ and duration $t_k-t_0$. We write $\pat{(\alpha,u)}{(\omega,v)}$ to mean that there exists a path from $(\alpha,u)$ to $(\omega,v)$ and say $(\omega,v)$ is reachable from $(\alpha,u)$. 
We also call $t_0$ a starting time and $t_k$ an arrival time from $(\alpha,u)$ to $(\omega,v)$. Each path between two fixed temporal nodes $(\alpha,u)$ and $(\omega,v)$ defines a pair of starting time and associated arrival time.
On the link stream of \autoref{fig:ls_simpleex}, two paths are illustrated: the green one $P_1 = (0,d,c),(1,c,b),(3,b,a)$ and the red one $P_2 = (0,d,c),(2,c,b),(3,b,a)$. Both have the same starting and arrival times from $(0,d)$ to $(3,a)$, namely times $0$ and $3$. Both paths are fastest.
We can also say $s$ is a starting time from a temporal node $(\alpha,u)\in T\times V$ to a node $v\in V$, in which case there exists some time $t\in T$ such that $s$ is the starting time of a path from $(\alpha,u)$ to $(t,v)$. Same goes for the arrival times. 

We say a path $P$ is shortest if it has minimal length and call its length the \emph{distance} from $(\alpha,u)$ to $(\omega,v)$, written $\sdis{(\alpha,u)}{(\omega,v)}$. Similarly, $P$ is fastest if it has minimal duration, in which case this duration is called the \emph{latency} from $(\alpha,u)$ to $(\omega,v)$ and is written $\tdis{(\alpha,u)}{(\omega,v)}$. Note that if $\pat{(\alpha,u)}{(\omega,v)}$, there exists at least one pair of starting time and arrival time $(s,a)\in \starr{(\alpha,u)}{(\omega,v)}$ such that $\tdis{(\alpha,u)}{(\omega,v)}=a-s$. 
Finally, $P$ is called shortest fastest if it has minimal length among the set of fastest paths from $(\alpha,u)$ to $(\omega,v)$. We call its length the \emph{\tfdis{}} from $(\alpha,u)$ to $(\omega,v)$ and write it $\fdis{(\alpha,u)}{(\omega,v)}$. In general, this is not a distance as it does not respect the triangular inequality and is only a premetric, a simple counterexample is shown on \autoref{fig:ls_ex_sfp}. On the same figure are drawn a shortest path, two fastest paths and a unique shortest fastest path.

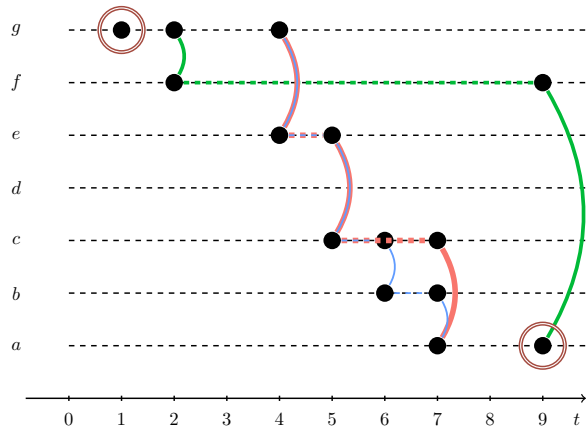
\begin{figure}
\begin{center}
\scalebox{0.7}{\begin{tikzpicture}[auto, thick, transform shape, scale = 1] 
\tikzstyle{every node} = [ellipse, minimum size = 0pt]
\draw (-1,3) node(v30) {$d$};
\draw (10,3) node(v31) {}; 
\draw (-1,5) node(v50) {$f$};
\draw (10,5) node(v51) {}; 
\draw (-1,2) node(v20) {$c$};
\draw (10,2) node(v21) {}; 
\draw (-1,0) node(v00) {$a$};
\draw (10,0) node(v01) {}; 
\draw (-1,6) node(v60) {$g$};
\draw (10,6) node(v61) {}; 
\draw (-1,1) node(v10) {$b$};
\draw (10,1) node(v11) {}; 
\draw (-1,4) node(v40) {$e$};
\draw (10,4) node(v41) {};
\path[draw,dashed] (0,3) -- (v31); 
\path[draw,dashed] (0,5) -- (v51); 
\path[draw,dashed] (0,2) -- (v21); 
\path[draw,dashed] (0,0) -- (v01); 
\path[draw,dashed] (0,6) -- (v61); 
\path[draw,dashed] (0,1) -- (v11); 
\path[draw,dashed] (0,4) -- (v41); 

\draw (1,6) node(v16) [fill=black] {};
\draw (2,6) node(v26) [fill=black] {};
\draw (2,5) node(v25) [fill=black] {};
\draw (4,6) node(v46) [fill=black] {};
\draw (4,4) node(v44) [fill=black] {};
\draw (5,4) node(v54) [fill=black] {};
\draw (5,2) node(v52) [fill=black] {};
\draw (6,2) node(v62) [fill=black] {};
\draw (6,1) node(v61) [fill=black] {};
\draw (7,2) node(v72) [fill=black] {};
\draw (7,1) node(v71) [fill=black] {};
\draw (7,0) node(v70) [fill=black] {};
\draw (9,5) node(v95) [fill=black] {};
\draw (9,0) node(v90) [fill=black] {};

\path[draw=darkgreen,line width=2pt] (v26) edge [bend left] (v25);
\path[draw=darkgreen,line width=2pt] (v95) edge [bend left] (v90);
\path[draw=darkgreen,line width=2pt,dashed] (v25) -- (v95);

\path[draw=lightblue,line width=3pt] (v46) edge [bend left] (v44);
\path[draw=lightblue,line width=3pt] (v54) edge [bend left] (v52);
\path[draw=lightblue,line width=3pt] (v72) edge [bend left] (v70);
\path[draw=lightblue,line width=3pt,dashed] (v44) -- (v54);
\path[draw=lightblue,line width=3pt,dashed] (v52) -- (v72);

\path[draw=darkyellow,line width=1pt] (v46) edge [bend left] (v44);
\path[draw=darkyellow,line width=1pt] (v54) edge [bend left] (v52);
\path[draw=darkyellow,line width=1pt] (v62) edge [bend left] (v61);
\path[draw=darkyellow,line width=1pt] (v71) edge [bend left] (v70);
\path[draw=darkyellow,line width=1pt,dashed] (v44) -- (v54);
\path[draw=darkyellow,line width=1pt,dashed] (v52) -- (v62);
\path[draw=darkyellow,line width=1pt,dashed] (v61) -- (v71);

\node[draw=darkred, double, fit = (v16)] {};
\node[draw=darkred, double, fit = (v90)] {};

\draw[thick] (0,-0.95) -- (0,-1.05) node[below] {$0$};
\draw[thick] (1,-0.95) -- (1,-1.05) node[below] {$1$};
\draw[thick] (2,-0.95) -- (2,-1.05) node[below] {$2$};
\draw[thick] (3,-0.95) -- (3,-1.05) node[below] {$3$};
\draw[thick] (4,-0.95) -- (4,-1.05) node[below] {$4$};
\draw[thick] (5,-0.95) -- (5,-1.05) node[below] {$5$};
\draw[thick] (6,-0.95) -- (6,-1.05) node[below] {$6$};
\draw[thick] (7,-0.95) -- (7,-1.05) node[below] {$7$};
\draw[thick] (8,-0.95) -- (8,-1.05) node[below] {$8$};
\draw[thick] (9,-0.95) -- (9,-1.05) node[below] {$9$};
\draw (-1,-1) node(ut) {};
\draw (10,-1) node[label={[label distance = 0.5mm]225:$t$}](vt) {};
\path[->,draw] (ut) -- (vt);

\end{tikzpicture}}
\caption{The shortest path from $(1,g)$ to $(9,a)$ (both encircled $\mycircle{darkred}$) is drawn in green $\myline{darkgreen}$. The two fastest paths are drawn in red $\myline{lightblue}$ and in blue $\myline{darkyellow}$. The sole shortest fastest path is the red one. Observe that, $\fdis{(1,g)}{(9,a)} = 3 > \fdis{(1,g)}{(9,f)} + \fdis{(9,f)}{(9,a)} = 2$.}
\label{fig:ls_ex_sfp}
\end{center}
\end{figure}

\section{Related work}\label{sec:relwork}

This work is close to the study of Wu et al. \cite{Wu2014}. As such, the applications of computing fastest and shortest paths mentioned by these authors also apply here. The main contribution of the present work is to compute \tfdisp{}, as well as distances and latencies, in a single pass over a dataset. Separately, Wu et al.'s fastest and shortest paths methods are insufficient to compute centralities such the betweenness of Latapy et al., while an algorithm combining them to produce \tfdisp{} is not efficient because it requires iterating multiple times over the dataset. Meanwhile, our methods iterate only once to produce the three metrics and are suitable for studying different aspects of a link stream. We also output information on the starting and arrival times of shortest (fastest) paths that give valuable information on connectivity. This study was instigated as a first step in computing Latapy et al.'s betweenness centrality. 

Furthermore, this work is also close to Tang et al. \cite{Tang2010} since these authors define a \emph{betweenness} centrality on temporal networks in terms of fastest shortest paths. Whether to use fastest shortest or shortest fastest paths (or any other path that combines temporal and structural information) depends on what information one wants to emphasize which depends on the context of the study. Shortest and fastest paths were also studied by Xuan et al. \cite{Xuan2003} and we were inspired by their all-pairs fastest path method to develop Algorithm \ref{alg:allpairs_sfp}. The latter is relevant to compute some centralities because metrics between all pairs of (temporal) nodes may be required. To our knowledge, Xuan et al.'s method is the only of its kind to return latencies between all pairs of nodes. More recently, Casteigts et al. \cite{Casteigts2015}. adopted the same strategies as Xuan et al. for computing shortest and fastest paths in a distributed way. 

Casteigts et al. \cite{Casteigts2013} also offer a survey of temporal networks that includes many applications of shortest and fastest paths. In particular, such paths can be used to study the reachability of a temporal node from another. It appears from that survey that either the distance or the latency is often used as a temporal metric to evaluate how well a temporal node can communicate with another. In this regard, the \tfdis{} can be used as another temporal function since it combines the temporal as well as the structural information into a single map. Note that the notion of \emph{foremost} paths (or journeys) is also used by some authors \cite{Casteigts2015} to study temporal reachability. A foremost path only has minimal arrival time, while its starting time is unconstrained. This type of path is also useful in many studies and we expect our algorithms can be extended to those cases to output lengths of \emph{shortest foremost} paths.

Finally, observe that the link stream framework is also close to the Time-Varying Graphs framework \cite{Casteigts2013}. Thus, all results presented in this paper carry to this other framework as well.

\section{Multiple-targets shortest fastest paths algorithms}\label{sec:algorithms}

The full implementations of the algorithms presented here, in C++, can be found online \cite{Simardgithub}.

We present here two main methods, Algorithms \ref{alg:forwpass} and \ref{alg:allpairs_sfp} that compute the distances, latencies and \tfdisp{} from one source event node to all other event nodes. Algorithm \ref{alg:allpairs_sfp} builds on the first method to compute those values for all pairs of event nodes. Subsection~\ref{sec:shortpatdelays} also presents Algorithm \ref{alg:sfp_special} that was derived from Algorithm \ref{alg:forwpass}. This last method was first devised to fairly compare Algorithm \ref{alg:forwpass} against the literature, but is also interesting as a standalone algorithm. We focus on the first two algorithms.

We present some small results that lead the way to those algorithms. The strategy for both methods is essentially the same: we compute the distances from any temporal node $(s_v,u)$ to $(t_v,v)$ such that $s_v$ is the \emph{largest} (or maximal) starting time from any $(t_u,u)$ to $(t_v,v)$. If it happens that $t_v - s_v = \tdis{(s_v,u)}{(t_v,v)}$, then this distance is the \tfdis{} from the former to the latter temporal node. Otherwise, since we iterate chronologically over $\remt$, this latency must have been computed at a time earlier than $t_v$ and is saved in memory. 

\subsection{Two simple lemmas}

The algorithms we present compute what we call \emph{\trstrip{}} that contain information about the lengths of shortest paths from one temporal node to another as well as the starting and arrival times of those paths.

\begin{definition}[Reachability triples]\label{def:shorttriple}
Let $(t_s,s)$ be an event node. If there exists a shortest path of length $l$ from $(t_s,s)$ to the event node $(t_y,y)$ that starts on a largest starting time $t\in \remt$, then we say $(t,t_y,l)$ is a \trstri{} from $(t_s,s)$ to $y$.
\end{definition}
In the following, we write $\dac_v$ for the dictionary of \trstrip{} from a fixed source event node to any node $v$. In order to reduce to cost of operations in $\dac_v$, we assume this dictionary is implemented in such a way that $\dac_v$ holds keys $s_v$ and $\dac_v[s_v]$ holds pairs $(a_v,d_v)$ that form \trstrip{} $(s_v,a_v,d_v)$. We write this dictionary so that accessing each $R_v$ takes constant time. 

Algorithms \ref{alg:forwpass} and \ref{alg:allpairs_sfp} compute distances \emph{from largest starting times only}. Those distances are contained in dictionaries $\dac_v$ for each $v\in V$ as part of \trstrip{}. Note that if a link stream reduces to a network, that is if the set of time instants $T$ is a singleton, then each $R_v$ will contain the usual distances from a fixed source to $v$. The temporal nature of a link stream forces us to take starting and arrival times into account when looking for shortest paths. Moreover, \trstrip{} could also be defined without the constraint that starting times are largest, however the algorithms would not be as efficient because the dictionaries would grow larger.

\autoref{lem:subpref} below, due to Wu et al. \cite{Wu2014}, states that shortest paths are prefix-shortest. We say a path $P_{(t_s,s)(t_u,u)}$ from a temporal node $(t_s,s)$ to another temporal node $(t_u,u)$ is a prefix of another path $P_{(t_s,s)(t_v,v)}$ from the same source to temporal node $(t_v,v)$ if $P_{(t_s,s)(t_u,u)}$ is a subsequence of $P_{(t_s,s)(t_v,v)}$. 

\begin{lemma}\label{lem:subpref}
Let $P_{(t_s,s)(t_v,v)}$ be a shortest path from a temporal node $(t_s,s)$ to another $(t_v,v)$. Then, every prefix $P_{(t_s,s)(t_u,u)}$ of $P_{(t_s,s)(t_v,v)}$ is a shortest path from $(t_s,s)$ to $(t_u,u)$.
\end{lemma}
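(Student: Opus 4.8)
The plan is to run the classical cut-and-paste (exchange) argument, taking care to respect the temporal constraints. Write the shortest path as $P_{(t_s,s)(t_v,v)} = (t_0,u_0,v_0),(t_1,u_1,v_1),\dots,(t_k,u_k,v_k)$, so $u_0 = s$, $v_k = v$, and its length is $k+1$. A prefix $P_{(t_s,s)(t_u,u)}$ is the initial segment $(t_0,u_0,v_0),\dots,(t_j,u_j,v_j)$ for some $0\le j\le k$; it ends at the node $u = v_j$, arrives at time $t_u = t_j$, and has length $j+1$. If $j=k$ there is nothing to prove, so assume $j<k$.

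Suppose for contradiction that $P_{(t_s,s)(t_u,u)}$ is not shortest from $(t_s,s)$ to $(t_u,u)$. Then there is a path $Q = (r_0,x_0,y_0),\dots,(r_m,x_m,y_m)$ from $(t_s,s)$ to $(t_u,u)$ with $m+1 < j+1$, i.e. $m<j$. I would splice $Q$ with the suffix $(t_{j+1},u_{j+1},v_{j+1}),\dots,(t_k,u_k,v_k)$ of $P_{(t_s,s)(t_v,v)}$ to form
\[
Q' \;=\; (r_0,x_0,y_0),\dots,(r_m,x_m,y_m),(t_{j+1},u_{j+1},v_{j+1}),\dots,(t_k,u_k,v_k),
\]
and then verify that $Q'$ is a path from $(t_s,s)$ to $(t_v,v)$ in the sense of \autoref{sec:background}: its first node is $x_0=s$ and its last is $v_k=v$; every listed triple lies in $E$ since it comes from $Q$ or from $P_{(t_s,s)(t_v,v)}$; consecutive triples share the required endpoint (inherited inside $Q$ and inside the suffix, and at the junction $y_m = u = v_j = u_{j+1}$); the bounds $r_0\ge t_s$ and $t_k\le t_v$ are inherited; and the time stamps are nondecreasing — the only new inequality being the junction condition $r_m \le t_{j+1}$, which holds because $Q$ arrives at $t_u = t_j$ and $P_{(t_s,s)(t_v,v)}$ is itself a path, so $t_j \le t_{j+1}$. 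Since the length of $Q'$ is $(m+1)+(k-j) < (j+1)+(k-j) = k+1$, this contradicts minimality of $P_{(t_s,s)(t_v,v)}$, so the prefix must be shortest.

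The only real subtlety, and the point where this departs from the textbook statement for static graphs, is precisely the junction time constraint $r_m\le t_{j+1}$: I expect getting that right to be the main obstacle, and it is exactly what makes it essential that the prefix \emph{arrives} at $t_u$ (rather than merely reaching $u$ at some unconstrained later time), since otherwise the candidate replacement $Q$ could arrive too late to be prepended to the suffix. (The degenerate case $j=0$ needs no separate treatment: $m<0$ is impossible, so no shorter $Q$ exists and the length-one prefix is trivially shortest.)
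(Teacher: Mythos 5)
Your proof is correct and follows essentially the same route as the paper's: assume a prefix is not shortest, replace it by a shorter path to the intermediate temporal node, splice with the suffix, and contradict minimality, with the key observation being that the replacement arrives no later than $t_u=t_j\leq t_{j+1}$ so the temporal ordering is preserved at the junction. The paper's version is just a terser statement of the same cut-and-paste argument; your explicit verification of the junction condition is the detail it leaves implicit.
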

\begin{proof}
Suppose otherwise and assume there exists a temporal node $(t_u,u)$ such that the prefix $P_{(t_s,s)(t_u,u)}$ of $P_{(t_s,s)(t_v,v)}$ is not shortest from $(t_s,s)$ to $(t_u,u)$. Then, there exists a shorter path from $(t_s,s)$ to $(t_u,u)$, $Q_{(t_s,s)(t_u,u)}$. Since $t_s\leq t_u \leq t_v$, we can use $Q_{(t_s,s)(t_u,u)}$ to form a shorter path to $(t_v,v)$, contradicting the minimality of $P_{(t_s,s)(t_v,v)}$.
\end{proof}

Let $(t_s,s)$ and $(t,v)$ be two temporal nodes. Then we define the outer distance from $(t_s,s)$ to $(t,v)$, $\sdis{(t_s,s)}{(t^-,v)}$, as either $\lim_{t_0\to t^-}\sdis{(t_s,s)}{(t_0,v)}$, when $t>t_s$, or $\sdis{(t_s,s)}{(t,v)}$, when $t_s=t$. \autoref{lem:bellmanshortpat} suggests it suffices to compute distances in induced graphs $G_t$ for any time $t$ to deduce the distances between two temporal nodes.

\begin{lemma}\label{lem:bellmanshortpat}
Let $(t_s,s)$ be a source temporal node and $(t_y,y)$ be a temporal node reachable from the source by a non-empty shortest path. Then, there exists $t_s\leq t\leq t_y$ and a connected component $C$ of $G_t$ such that
\begin{align}\label{eq:bellmanshortpat}
\sdis{(t_s,s)}{(t_y,y)} &= 
\min_{u,v\in C}
\sdis{(t_s,s)}{(t^-,u)}\\ 
&+ \sdis{(t,u)}{(t,v)}
+ \sdis{(t,v)}{(t_y,y)}.\nonumber
\end{align}
\end{lemma}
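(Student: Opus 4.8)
The plan is to prove \autoref{lem:bellmanshortpat} by establishing two inequalities: that the right-hand side of \eqref{eq:bellmanshortpat} is $\ge$ the left-hand side for \emph{every} admissible $t$ and $C$, and that equality is achieved for one well-chosen pair $(t,C)$.

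\textbf{Concatenation direction.} First I would fix any $t$ with $t_s\le t\le t_y$, any connected component $C$ of $G_t$, and any $u,v\in C$, and show that $\sdis{(t_s,s)}{(t^-,u)}+\sdis{(t,u)}{(t,v)}+\sdis{(t,v)}{(t_y,y)}\ \ge\ \sdis{(t_s,s)}{(t_y,y)}$. If any of the three terms is infinite this is vacuous, so assume all are finite. The middle term is realized by a path all of whose edges occur at time $t$, i.e.\ a walk in $G_t$ from $u$ to $v$ (finite precisely because $u,v$ lie in the same component $C$). For the first term, the map $t_0\mapsto\sdis{(t_s,s)}{(t_0,u)}$ is non-increasing and integer-valued, so its left limit at $t$ is attained by an honest path arriving at some time $t_0<t$ (or, when $t=t_s$, by a path arriving exactly at $t_s$). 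Splicing this path, then the $G_t$-walk from $u$ to $v$, then an optimal path from $(t,v)$ to $(t_y,y)$, produces a valid path from $(t_s,s)$ to $(t_y,y)$: the successive edge-times are non-decreasing across the two joins by construction, and the length of the concatenation is exactly the sum of the three lengths. Hence the minimum over $u,v\in C$ of the displayed expression is $\ge\sdis{(t_s,s)}{(t_y,y)}$ for \emph{every} such $t,C$. The only care needed here is bookkeeping for the degenerate cases $u=v$, an empty prefix or suffix, and $t=t_s$.

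\textbf{Achieving equality.} To exhibit a good $(t,C)$, take a non-empty shortest path $P=(t_0,u_0,v_0),\dots,(t_k,u_k,v_k)$ from $(t_s,s)$ to $(t_y,y)$, so $k\ge 0$ and its length $k+1$ equals $\sdis{(t_s,s)}{(t_y,y)}$. Since the $t_i$ are non-decreasing, the indices $i$ with $t_i=t_k$ form a suffix $\{j,\dots,k\}$. Set $t:=t_k$, let $C$ be the connected component of $G_t$ containing the edge $u_kv_k$ (equivalently, containing $y$), and put $u:=u_j$, $v:=v_k=y$; note $u\in C$ because the suffix is a walk in $G_t$ from $u$ to $y$. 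By \autoref{lem:subpref}, the prefix $(t_0,u_0,v_0),\dots,(t_{j-1},u_{j-1},v_{j-1})$ is a shortest path from $(t_s,s)$ to $(t_{j-1},u)$ of length $j$, and since $t_{j-1}<t$ (or $j=0$, in which case $u=s$ and the outer distance is $0$) it gives $\sdis{(t_s,s)}{(t^-,u)}\le j$. The suffix is a walk in $G_t$ from $u$ to $y$ of length $k-j+1$, so $\sdis{(t,u)}{(t,v)}\le k-j+1$; and $\sdis{(t,v)}{(t_y,y)}=\sdis{(t,y)}{(t_y,y)}=0$ because $t=t_k\le t_y$. Summing, the expression for this $(u,v)$ is at most $j+(k-j+1)+0=k+1=\sdis{(t_s,s)}{(t_y,y)}$.

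\textbf{Conclusion.} For the $t$ and $C$ chosen above we then get $\min_{u,v\in C}(\cdots)\le\sdis{(t_s,s)}{(t_y,y)}\le\min_{u,v\in C}(\cdots)$, hence equality, which is \eqref{eq:bellmanshortpat}. The step I expect to be the main obstacle is the handling of the outer distance $\sdis{(t_s,s)}{(t^-,u)}$: one has to argue that its value is realized by a concrete path arriving \emph{strictly} before $t$, so that it may legitimately be concatenated with edges occurring at time $t$; this is precisely where monotonicity in the time argument together with the integrality of path lengths is used. Everything else reduces to concatenation of paths and a short case analysis on empty sub-paths and on whether $t=t_s$.
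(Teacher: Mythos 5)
Your proof is correct, and it follows the same basic template as the paper's — decompose a shortest path into a prefix, a constant‑time block lying in one induced graph $G_t$, and a suffix, then invoke \autoref{lem:subpref} — but it differs in two ways worth noting. First, you prove \emph{both} inequalities: your ``concatenation direction'' shows that for every admissible $t$, $C$ and every $u,v\in C$ the three‑term sum is at least $\sdis{(t_s,s)}{(t_y,y)}$, and only then do you exhibit a pair attaining it. The paper's proof only does the second half: it exhibits one decomposition whose length equals the distance, which shows the minimum is \emph{at most} $\sdis{(t_s,s)}{(t_y,y)}$ but does not by itself rule out that some other pair $u,v\in C$ makes the right-hand side smaller; your extra half closes that (easily fixable) gap, and your explicit argument that the outer distance $\sdis{(t_s,s)}{(t^-,u)}$ is realized by an actual path arriving strictly before $t$ (via monotonicity plus integrality) is exactly the bookkeeping the splice needs. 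Second, you choose the constant‑time block differently: the paper takes $t=t_j$ where $Q$ is a constant‑time subpath with the largest number of elements, whereas you take $t=t_k$, the final time of the path, so that the third term vanishes ($v=y$, $\sdis{(t,y)}{(t_y,y)}=0$); both choices work, and yours slightly simplifies the accounting. A last small point in your favour: you take $C$ to be the full connected component of $G_t$ containing $y$, as the statement requires, whereas the paper identifies $C$ with the set $\set{u_j,\dots,u_{k+1}}$ of nodes of $Q$, which is in general only a subset of that component.
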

\begin{proof}
Let $P=(t_1,u_1,u_2), \dots, (t_n, u_n, u_{n+1})$ be a non-empty shortest path from $(t_s,s)$ to $(t_y,y)$. Then, $t_y\geq t_n \geq t_1 \geq t_s$ and $u_1 = s, u_{n+1}=y$. There exist non-empty subpaths in $P$ of the form $(t_j,u_j,u_{j+1}), \dots, (t_j, u_k, u_{k+1})$. Let $Q=(t_j,u_j,u_{j+1}), \dots, (t_j,u_k,u_{k+1})$ be such a subpath with the largest number of elements. By Lemma \ref{lem:subpref}, the prefix of $P$ from $(t_s,s)$ to $(t_j^-,u_j)$ is shortest and its length is $\sdis{(t_s,s)}{(t_j^-,u_j)}$. Moreover, the subpath of $P$ from $(t_j,u_{k+1})$ to $(t_y,y)$ must also be shortest with length $\sdis{(t_j,u_{k+1})}{(t_y,y)}$. Finally, since $P$ is shortest and the two subpaths formed by $P\setminus Q$ are shortest, $Q$ must also be a shortest path. Then, $Q$ has length $\sdis{(t_j,u_j)}{(t_j,u_{k+1})}$. The result follows by letting $t=t_j$ and $C = \set{u_j, u_{j+1}, \dots, u_{k+1}}$ be a connected component of $G_{t}$. 
\end{proof}

\subsection{A single-source method}

In this section, we present Algorithm \ref{alg:forwpass} that computes the distances (from largest starting times), latencies and \tfdisp{} from a source event node $(t_s,s)$ to all other reachable event nodes. This algorithm mixes iterations on the induced graphs $G_t$ for each time $t\in \remt$ with an all-pairs distances method on their connected components. Recall that if $s^*$ is the largest starting time from the source $(t_s,s)$ to some temporal node $(t,v)$, then either $t - s^* = \tdis{(t_s,s)}{(t,v)}$ or not. If so, then $\sdis{(t_s,s)}{(t,v)}$ is the \tfdis{} from $(t_s,s)$ to $(t,v)$. This length is computed with \autoref{lem:bellmanshortpat} by using the outer distances saved in memory as well as the all-pairs distance method on $G_t$. Thus, when we iterate over all pairs $(s_v,d_v)$ of starting time and outer distance from the source to $(t,v)$, we can deduce the duration and length of the shortest fastest paths from the source to $(t,v)$. This method uses a set $D$ that is assumed sorted in lexicographic order. Sorting $D$ helps lower the temporal complexity, but is not fundamental to understand the algorithm.

\begin{rem}
In Algorithms \ref{alg:forwpass} and \ref{alg:allpairs_sfp}, we assumed the dictionaries were implemented in the form of self-balanced binary trees in order to obtain logarithmic worst-case complexities. In our implementations, we used hash tables to lower the average-case complexity.
\end{rem}

Before proving that Algorithm \ref{alg:forwpass} is correct, let us go through a small example in order to build intuition. Algorithms \ref{alg:allpairs_sfp} and \ref{alg:sfp_special} are highly similar. 

\begin{exa}
Consider again the link stream of \autoref{fig:ls_ex_sfp}. Suppose the source is again $(1,g)$, $t=7$ and $C = \set{a,b,c}$. Thus, Algorithm \ref{alg:forwpass} will look for shortest (fastest) paths that can reach temporal nodes $(7,a),(7,b)$ and $(7,c)$. The unique largest starting time from the source to $C$ at time $7$ is $s_v = 4$. This time is given by the greatest key in $\dac_u$ for any $u\in C$. Then, we iterate over the outer distances from $(4,g)$ to $(7,v)$ for each $v\in C$. Note how the time of the source has changed from $1$ to $4$. By definition, and since the link stream is discrete, outer distances are given as the distances from $(4,g)$ to $(6,v)$ for each $v\in C$. Thus, we find outer distances $2$ from $(4,g)$ to $(7,c)$ and $3$ from $(4,g)$ to $(7,b)$. Node $a$ is discovered at time $7$ and its outer distance does not exist before that. Finally, combining the outer distances with the distances inside the graph induced by $C$ at time $7$, we find the distance from $(4,g)$ to $(7,c)$ is $2$, $3$ from $(4,g)$ to $(7,b)$ and also $3$ from $(4,g)$ to $(7,a)$. This last distance is given by the combination between the outer distance from $(4,g)$ to $(7,c)$ and the distance in $C$ from $(7,c)$ to $(7,a)$. Since node $a$ is discovered first at time $7$, that is its first arrival time from $(1,g)$ is $7$, then the latency from $(1,g)$ to $(7,a)$ is $\tdis{(1,g)}{(7,a)} = 7 - 4 = 3$ and the distance from $(1,g)$ to $(7,a)$ is the \tfdis{} from the former to the latter.
\end{exa}

\begin{prop}\label{prop:sfpcorr}
Algorithm \ref{alg:forwpass} correctly computes the latencies and \tfdisp{} from a source event node to all other reachable event nodes as well as the set of dictionaries $\set{\dac_v; v\in V}$. It requires at most $\bigO{\abs{V}^2\abs{\remt}^2\log{\abs{\remt}} + \abs{V}\abs{E_\remt}}$ operations in the worst case.
\end{prop}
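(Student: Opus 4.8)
The plan is to prove correctness by induction on the chronological pass of Algorithm~\ref{alg:forwpass} over the event times of $\remt$, and to bound the running time by separating the cost of the graph traversals from the cost of the dictionary bookkeeping.

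\textbf{Correctness.} I would carry through the iteration that processes event time $t\in\remt$ the following loop invariant: (i) for every $v\in V$, the dictionary $\dac_v$ contains exactly the \trstrip{} from the source $(t_s,s)$ to $v$ whose arrival time is at most $t$; and (ii) for every node $v$ already reached by time $t$, the latency and \tfdis{} recorded so far equal $\tdis{(t_s,s)}{(t,v)}$ and $\fdis{(t_s,s)}{(t,v)}$. The base case is the empty path at $t_s$. For the inductive step, fix a connected component $C$ of $G_t$ and a node $v\in C$ reachable by a path arriving exactly at $t$. By the inductive hypothesis, the outer distances $\sdis{(t_s,s)}{(t^-,u)}$ for $u\in C$ are exactly the values that can be read off the dictionaries as they stand at the start of this iteration; the all-pairs distance routine run on $G_t$ supplies the in-component distances $\sdis{(t,u)}{(t,u')}$; and \autoref{lem:bellmanshortpat}, applied to a shortest path reaching $(t,v)$ and decomposed at time $t$ around its final batch of same-time edges, shows that the minimum over $u,u'\in C$ of the three-term sum of \eqref{eq:bellmanshortpat} equals $\sdis{(t_s,s)}{(t,v)}$; conversely every such sum is the length of an actual path, so the value computed by the algorithm is precisely this distance. \autoref{lem:subpref} makes the composition legitimate --- the prefixes whose lengths are read off the dictionaries are themselves shortest --- and it also identifies the largest starting time attached to the new \trstri{}: a shortest path to $(t,v)$ either lies entirely at times before $t$, hence is already recorded by invariant (i), or crosses an edge of $G_t$ and is caught by the decomposition, so the largest admissible start is the largest dictionary key occurring among the $\dac_u$, $u\in C$, at the relevant length. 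Finally, since the sweep is chronological and every relevant arrival time lies in $\remt$, maintaining a running best over arrival times yields $\tdis{(t_s,s)}{(t,v)}$, and the test $t-s_v=\tdis{(t_s,s)}{(t,v)}$ correctly detects when the distance just obtained along the largest-start shortest path is the corresponding \tfdis{}.

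\textbf{Complexity.} After an initial sort of $D$ in $O(\abs{D}\log\abs{D})$ time --- which is dominated --- the outer loop runs $\abs{\remt}$ times. Summed over all event times, the connected-components search together with the breadth-first searches underlying the all-pairs distance computations costs $\sum_{t\in\remt}O\!\left(\abs{V}\bigl(\abs{V}+\abs{E(G_t)}\bigr)\right)=O\!\left(\abs{V}^2\abs{\remt}+\abs{V}\abs{E_\remt}\right)$, using $\sum_{t\in\remt}\abs{E(G_t)}=\abs{E_\remt}$. The remaining work is dictionary manipulation: at a fixed event time, for each component we iterate over its dictionaries' at most $\abs{\remt}$ keys (one candidate starting time per event time) and combine each with the at most $\abs{V}$ targets in the component, each access to a balanced binary tree costing $O(\log\abs{\remt})$; this is $O(\abs{V}^2\abs{\remt}\log\abs{\remt})$ per event time, hence $O(\abs{V}^2\abs{\remt}^2\log\abs{\remt})$ over the whole run. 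Adding the two contributions gives the claimed bound.

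\textbf{Main obstacle.} I expect the delicate point to be part (i) of the invariant: proving that composing the memorised outer distances with the freshly computed distances inside $G_t$ via \autoref{lem:bellmanshortpat} recovers \emph{all} shortest paths reaching $(t,v)$ and, simultaneously, the \emph{largest} of their starting times --- and then tying this to part (ii), since the dictionaries store only the distances of shortest paths that realise a largest starting time, whereas the \tfdis{} must be read off correctly through the running, chronological extraction of latencies. The awkward cases are a node first discovered at $t$ (with no earlier outer distance), ties among several candidate starting times, and shortest paths that use no edge at the chosen decomposition time; once these are handled, the rest is a routine application of \autoref{lem:subpref} and \autoref{lem:bellmanshortpat} together with standard graph-search accounting.
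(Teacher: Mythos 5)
Your proposal follows essentially the same route as the paper: induction over the chronological sweep of $\remt$ with the invariant that the dictionaries $\dac_v$ and the latency/\tfdis{} records are correct up to the current event time, \autoref{lem:subpref} and \autoref{lem:bellmanshortpat} to decompose a shortest path into a memorised outer distance plus an in-component distance, and the same splitting of the cost into graph traversals ($\bigO{\abs{V}\abs{E_\remt}}$ overall) versus dictionary bookkeeping ($\bigO{\abs{V}^2\abs{\remt}\log{\abs{\remt}}}$ per event time). The one obstacle you flag --- why the decomposition of \autoref{lem:bellmanshortpat} may be taken at the current time and component $C_v$ even when the shortest path uses no edge there --- is exactly the point the paper resolves, by observing that the outer distances $\sdis{(s_w,s)}{(t,u)}$ are non-increasing in the arrival time $t$, so the minimum in \eqref{eq:bellmanshortpat} is still attained with $t_i=t_\Delta$ and $C_i=C_v$.
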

\begin{proof}[Proof of correctness]
Let $(t_v,v)\in \remt\times V$ be some reachable destination. Let's show by induction on $\Delta:=\abs{\set{t_0\in \remt; t_v\geq t_0\geq t_s}}$ that $d[(t_v,v)] = \fdis{(t_s,s)}{(t_v,v)}$, $f[(t_v,v)]=\tdis{(t_s,s)}{(t_v,v)}$ and $\dac_v$ is correct up to time $t_v$.
\begin{itemize}
\item
When $\Delta = 1$, we iterate only on time $t_s$ and the result is clear.
\item
Suppose the result holds for all $k < \Delta$. Let $(t_1,\dots,t_{\Delta-1})$ be the times previously iterated over on line \ref{line:timeloop} and $t_\Delta$ the current time. By the induction hypothesis, by time $t_{\Delta-1}$, all values of $\dac_w$, for all $w\in V$, are correctly updated. Let $C_v$ be the connected component of $G_{t_\Delta}$ containing $v$. If $s\in C_v$, then the result follows as in the case with $\Delta=1$. Then, suppose $s\notin C_v$. Since each $\dac_v$ is correctly updated up to time $t_{\Delta-1}$ for each reachable $v\in V$, $D$ contains triples $(-s_w,d_w,w)$ for each $w\in C_v$ that have been visited prior to $t_{\Delta-1}$ from the source from a starting time $s_w$. The set $D$ contains the largest starting time $s_w$ from the source to $(t_{\Delta},w)$. Then, either $t_\Delta+s_w = \tdis{(t_s,s)}{(t_\Delta,w)}$ or this latency is given by some $f[(t_0,w)]$ such that $t_0<t_\Delta$. Let's iterate on $(-s_w,d_w,w)$. 

By Lemma \ref{lem:bellmanshortpat}, there exists a time $-s_w\leq t_i\leq t_{\Delta}$ and a connected component $C_i$ of $G_{t_i}$ such that $\sdis{(-s_w,s)}{(t_{\Delta},w)} = \min_{x,y\in C_i}\sdis{(-s_w,s)}{(t_i^-,x)} + \sdis{(t_i,x)}{(t_i,y)}$ $+ \sdis{(t_i,y)}{(t_{\Delta},w)}$. The sequence of distances 
\[
\sdis{(-s_w,s)}{(-s_w,u)},
\dots,
\sdis{(-s_w,s)}{(t_{\Delta},u)}
\] 
is non-increasing for each $u\in V$ because each element is minimal. Thus, since $w\in C_v$, in particular this lemma holds with $t_i=t_{\Delta}$ and $C_i = C_v$. Then, 
\begin{align*}
\sdis{(-s_w,s)}{(t_{\Delta},w)} 
&= 
\min_{x,y\in C_v}\sdis{(-s_w,s)}{(t_{\Delta}^-,x)}\\ 
&+ 
\sdis{(t_{\Delta},x)}{(t_{\Delta},y)} \\
&+ 
\sdis{(t_{\Delta},y)}{(t_{\Delta},w)}\\ 
&= 
\min_{u\in C_v} 
\sdis{(-s_w,s)}{(t_{\Delta-1},x)}\\ 
&+ 
\sdis{(t_{\Delta},x)}{(t_{\Delta},w)}.
\end{align*} 
By the induction hypothesis, the outer distance $d_x=\sdis{(-s_w,s)}{(t_{\Delta-1},x)}$ can be recovered from $(-s_w,t_{\Delta-1},d_x)\in \dac_x$ for each $x\in C_v$. Then, using $d_x$ and the dictionary $d'$ returned by the all-pairs distances algorithm on line \ref{line:apdist_sfpeff}, the expression above reduces to $\sdis{(-s_w,s)}{(t_{\Delta},w)} = \min_{x\in C_v}d_x + d'[(x,w)]$. In the last equation, the intermediary node $x\in C_v$ over which the minimum is taken is irrelevant. If $y\in U_x$, then the distance from the source to $y$ is the same as the distance from the source to $x$. Thus, it holds that:
\begin{align*}
\sdis{(-s_w,s)}{(t_{\Delta},w)} 
&= 
\min_{x\in C_v}
d_x + d'[(x,w)]\\
&=
\min_{x\in C_v}
\min_{y\in U_x}
d_y + d'[(y,w)]\\
&=
\min_{x\in C_v}
d_x +
\min_{y\in U_x}
d'[(y,w)].
\end{align*} 
Thus, when we iterate on the element $(-s_w,d_w,w)$ from $D$, we construct the set $U_w$ of nodes at distance $d_w$ from $(-s_w,s)$ at time $t_{\Delta}$. The last equation is thus used to insert into $\dac_w$ the right triple $(-s_w,t_{\Delta},\sdis{(-s_w,s)}{(t_{\Delta},w)})$ for each $w\in C_v$. When we have iterated over all of $D$, all dictionaries $\dac_v$ are correct at time $t_{\Delta}$. Finally, it suffices to observe that once $f[(t_\Delta,w)]$ is updated with its final value, then by definition the update of $d[(t_\Delta,w)]$ on \autoref{line:updiseff} yields the \tfdis{} from $(t_s,s)$ to $(t_\Delta,w)$ for each $w$. 
\end{itemize}
\end{proof}
\begin{proof}[Proof of complexity]
Let us write $n:=\abs{V}, m_t:=\abs{E_t}$ and $\omega:= \abs{\remt}$. On each time $t\in \set{t_0\in \remt; t_0\geq t_s}$, we first look up the connected components of $G_t$, which requires at most $\bigO{n + m_t}$ operations. On each component $C$ of $G_t$, we run an all-pairs distances method, which makes at most $\bigO{n^2 + nm_t}$ operations. For each node $v\in V$, the list in $\dac_v[-s_u]$ contains at most $ \omega$ elements since there can be at most as many pairs in $\dac_v[-s_u]$ as there are arrival times on $v$. The same goes for the number of keys $s_v$ in $\dac_v$.

There are at most $ \omega$ times $a_v$ such that $(s_v,a_v,d_v)\in \dac_v$ and thus $D$ can be constructed with at most $\bigO{n \omega}$ operations for all $v\in C$. Inserting and removing an element from $\dac_v[-s_u]$ takes at most $\bigO{1 + \log \omega + \log \omega}$ operations: $\bigO{1}$ operation for accessing $R_w$, $\bigO{\log \omega}$ operations for accessing key $-s_u$ and $\bigO{\log \omega}$ operations to insert or remove an item in a set of size at most $ \omega$. The costliest operations on the connected component $C$ are those insertions and deletions. Thus, operating over $C$ takes at most $\bigO{n\log \omega}$ operations. The list $D$ contains at most $\bigO{n \omega}$ triples since for each node $v\in V$, it holds a largest starting time and at most $ \omega$ distances (one distance for each arrival time on $v$). Thus, the \textbf{for} loop over $D$ will make at most $\bigO{n^2 \omega\log \omega}$ operations. 

The total number of operations at any time $t\in \remt$ is bounded above by $\bigO{n^2 \omega\log \omega} + \bigO{n^2 + nm_t}$. 
It suffices to multiply this sum by $\bigO{\omega}$ and use the observation that $\sum_{t\in \remt}m_t = \abs{E_\remt}$.
\end{proof}

Observe that we use the sets $V, E_{\remt}$ and $\remt$ as parameters to evaluate the temporal complexities of our algorithms. These appear as natural choices since $\remt$ indicates how the temporal dimension affects the number of operations while $E_{\remt}$ is a surrogate for $E$, which is in general infinite.

\subsection{A multiple-sources \tfdisp{} method}

Suppose $\remt$ is finite and starts on some time $a$. Algorithm \ref{alg:allpairs_sfp} returns a set of dictionaries of \tfdisp{} $D_{uv}$ for each pair of nodes $(u,v)\in V^2$ of dictionary $D_{uv}[s_{uv}] = (a_{uv}, d_{uv})$ such that $\tdis{(s_{uv},u)}{(a_{uv},v)} = a_{uv} - s_{uv}$ and $\sdis{(s_{uv},u)}{(a_{uv},v)} = d_{uv}$. During its execution, it updates a dictionary $D^0$ such that $D_{uv}[t] = (a_{uv},d_{uv})$, $t\in \dac_v$ and $(a_{uv}, d_{uv})\in \dac_v[t]$ from $(a,u)\in T\times V$. This dictionary helps in computing $D$ and in constructing $\dac_v$ from any source. It also returns a set of dictionaries $F_{uv}$ of latencies. 

\begin{prop}\label{prop:ap_sfpcorrect}
Algorithm \ref{alg:allpairs_sfp} returns the latencies, \tfdisp{} and dictionaries $\dac_v$ between all pairs of nodes in at most $\bigO{\abs{\remt}\abs{V}^2\left(\abs{V} + \abs{\remt}\right)\log{\abs{\remt}} + \abs{V}\abs{E_\remt}}$ operations.
\end{prop}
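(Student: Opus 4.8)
The plan is to reduce to Proposition~\ref{prop:sfpcorr}. Since every path from a node $u$ to a node $v$ starts at a time at least $a=\min\remt$, it is enough to compute, separately for each $u\in V$, the \trstrip{}, latencies and \tfdisp{} from the event node $(a,u)$ to every reachable event node: the dictionaries $D_{uv}$ and $F_{uv}$ then record, indexed by starting time, exactly the \tfdis{} and the latency attached to each pair of (largest) starting time and arrival time that $u$ and $v$ realize. So, as far as \emph{what} gets computed is concerned, Algorithm~\ref{alg:allpairs_sfp} is just Algorithm~\ref{alg:forwpass} run once from each source $(a,u)$, $u\in V$. What genuinely changes is the data layout: the induced-graph work at a time $t$ — the connected components of $G_t$ and the all-pairs distances inside each of them — does not depend on the source and is done only once per $t$, and the auxiliary dictionary $D^0$, recording for each source $u$ and each node $v$ the entries of $\dac_v$ obtained from $(a,u)$, is kept globally and reused both to assemble $D_{uv},F_{uv}$ and to reconstruct $\dac_v$ from an arbitrary source.

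For correctness I would run an induction on $t\in\remt$ taken in increasing order, with the invariant: once time $t$ has been processed, for every source $u\in V$ the dictionary $\dac_v$ of \trstrip{} from $(a,u)$, together with $D^0$, $D_{uv}$ and $F_{uv}$, is correct up to time $t$, for every $v\in V$. The base case $t=a$ is immediate, and the inductive step is, source by source, verbatim the inductive step of the proof of Proposition~\ref{prop:sfpcorr}: one applies \autoref{lem:bellmanshortpat} with the chosen time equal to the current $t$ and the connected component equal to $C_v$, the component of $G_t$ containing $v$, recovers $\sdis{(-s_w,s)}{(t,w)}=\min_{x\in C_v}d_x+d'[(x,w)]$ from the outer distances $d_x$ stored in $\dac_x$ and the component-distance dictionary $d'$ of $G_t$, and updates $\dac_w$, $D^0$, $D_{uv}$ and $F_{uv}$ accordingly; the final update of $D_{uv}$ converts a distance from a largest starting time into a \tfdis{} exactly when that starting time also witnesses the latency. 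The only point that is new relative to Proposition~\ref{prop:sfpcorr} is that the shared copies of $D^0$ and of $d'$ must be read before they are overwritten, so that whatever order over the $\abs{V}$ sources Algorithm~\ref{alg:allpairs_sfp} uses inside one pass of the time loop creates no interference.

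For the complexity, write $n=\abs{V}$, $\omega=\abs{\remt}$, $m_t=\abs{E_t}$. The source-independent part costs, per time $t$, $\bigO{n+m_t}$ for the components of $G_t$ and $\bigO{n^2+n m_t}$ for the all-pairs distances on those components; summed over $\remt$, and using $\sum_t m_t=\abs{E_\remt}$, this is $\bigO{n^2\omega+n\abs{E_\remt}}$. The source-dependent part repeats, for all $n$ sources, the analogue of the \textbf{for} loop over $D$ of Algorithm~\ref{alg:forwpass}, now also touching $D^0$, $D_{uv}$ and $F_{uv}$; here one uses that every list stored in $\dac_v$, $D^0$, $D_{uv}$ and $F_{uv}$ is indexed by arrival times and hence has length $\bigO{\omega}$, so each insertion or deletion costs $\bigO{\log\omega}$, and that the per-$t$ distance dictionary $d'$ needed inside the loop has been computed once already. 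Counting these operations exactly as in the proof of Proposition~\ref{prop:sfpcorr}, but charging the shared work only once, gives a per-time cost of $\bigO{n^2(n+\omega)\log\omega}$, hence $\bigO{\omega\, n^2(n+\omega)\log\omega}$ over all of $\remt$; adding the two parts yields the stated bound $\bigO{\abs{\remt}\abs{V}^2(\abs{V}+\abs{\remt})\log{\abs{\remt}}+\abs{V}\abs{E_\remt}}$.

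The main obstacle is exactly this last accounting. The naive estimate ``$\abs{V}$ times the single-source bound of Proposition~\ref{prop:sfpcorr}'' is strictly weaker than the claim, and reaching the stated bound requires genuinely using two savings: the $G_t$-level computations are shared across all sources — this is what turns a factor $\abs{V}^2$ into $\abs{V}$ in the $\abs{E_\remt}$ term — and the per-source dictionary traffic is controlled through the $\bigO{\omega}$ bounds on the lengths of the lists stored in $\dac_v$, $D^0$, $D_{uv}$ and $F_{uv}$. On the correctness side, the one genuinely delicate point is the read-before-write discipline for the shared dictionary $D^0$ within a single pass of the time loop; everything else transcribes the single-source arguments directly.
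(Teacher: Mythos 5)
Your proposal is correct and follows essentially the same route as the paper: correctness by induction over the event times, deferring the key distance-update step to the single-source argument of Proposition~\ref{prop:sfpcorr}, and a complexity count that charges the per-time work on $G_t$ (components and all-pairs distances, giving the $\abs{V}\abs{E_\remt}$ term) only once while bounding all dictionary lists by $\bigO{\abs{\remt}}$ to get the $\bigO{\abs{\remt}\abs{V}^2(\abs{V}+\abs{\remt})\log\abs{\remt}}$ term. Your closing remark correctly identifies the two savings over the naive ``$\abs{V}$ times single-source'' bound, which is exactly the comparison the paper draws immediately after the proposition.
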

\begin{proof}[Proof of correctness]
Let us show that $D^0[u,v][t_v]$ holds correct \trstrip{} from $(a,u)$ to $(t_v,v)$ for any two nodes $u$, $v$ and time $t_v$. Thus, let us fix those three variables. Let us show this by induction on $\Delta := \abs{\set{t\in \remt; a\leq t \leq t_v}}$.
\begin{itemize}
\item 
If $\Delta = 1$, then either $u$ and $v$ are in the same connected component $C$ of $G_{t_v}$ or not. This part is clear.
\item
Suppose the result holds for any $k<\Delta$. Let $(t_1,\dots,t_{\Delta -1})$ be the sequence of times previously iterated over. Let $C_v$ be the connected component containing $v$ at time $t_\Delta$. If $u\in C_v$, then we argue as in the first case and the result follows. Otherwise, by the induction hypothesis, there must exist a largest starting time $s_v$ from $u$ to $(t_{\Delta},v)$ that can be found in $SA[u,w][t_{\Delta-1}]$, for some $w\in C_v$ since all such node $w$ is connected to $v$. Observe that $SA[u,v][t_{\Delta-1}]$ contains pairs of largest starting time and arrival time from $u$ to $(t_{\Delta-1},v)$. Observe also that $t_\Delta$ is again an arrival time on $v$. Thus, it suffices to compute the distance from $(s_v,u)$ to $(t_{\Delta},v)$ to obtain a \trstri{} $(s_v,t_{\Delta},d_v)$ from $(a,u)$ to $v$. We argue as in the proof of Algorithm \ref{alg:forwpass} that Algorithm \ref{alg:allpairs_sfp} returns this distance $d_v$. The update $D[u,v][t_{\Delta}][s^*] \gets d^*$ again follows the same reasoning as before.
\end{itemize}
\end{proof}

\begin{proof}[Proof of complexity]
Again, let $n:=\abs{V}, m_t:=\abs{E_t}$ and $\omega:= \abs{\remt}$. The costliest operations occur in the \textbf{for} loop starting on \autoref{line:forloopap}. There are at most $\omega$ keys on each $SA_{uv}$, for any $u,v\in V$. For any $t\in \remt$ and $u,v\in V$, the size of $SA_{uv}[t]$ is upper-bounded by $\omega$ since the starting time is maximal. Thus, at most $\bigO{1 + \log\omega + \log\omega} \subseteq \bigO{\log\omega}$ operations are required. Finding the largest starting time $s_v$ requires in the worst case $\bigO{n\log\omega}$ operations. By the same reasoning, the insertion on \autoref{line:megainsert} will make at most $\bigO{\omega\log\omega}$ operations.

$D^0_{uv}[t]$, for any $u,v\in V$ and $t\in \remt$, has a size at most $\omega^2$, thus the loop over $C$ to find $d_{\min}$ requires at most $\bigO{n\log\omega}$ operations. 

Recovering the last element of $D^0_{uv}[t]$ takes at most $\bigO{\log \omega}$ operations, thus the loop on $C_v$ makes at most $\bigO{\abs{C_v}\log{\omega}}$ operations. Meanwhile, inserting into $SA_{uv}[t]$ takes at most $\bigO{\omega \log{\omega}}$ operations. The \textbf{for} loop on \autoref{line:forloopap} thus makes at most:
\begin{align*}
\sum_{u\in C}\sum_{v\in V\setminus C}
\bigO{(\abs{C_v} + \omega)\log{\omega}}
&\leq
\sum_{u\in C}\sum_{v\in V}
\bigO{(\abs{C_v} + \omega)\log{\omega}}\\
&\leq
\sum_{u\in C}
\bigO{n(n + \omega)\log{\omega}}
\end{align*}
operations. This loop is itself repeated for all connected components $C\subseteq V(G_t)$, which in turn yields:
\begin{align*}
\sum_{C\subseteq V}\sum_{u\in C}
\bigO{n(n + \omega)\log{\omega}}
&=
\sum_{u\in V}
\bigO{n(n + \omega)\log{\omega}}\\
\end{align*}
operations.
Thus, this method should make at most
$
\bigO{n^2 + nm_t} +
\bigO{n^2(n + \omega)\log{\omega}}
$
operations in the worst case on each time $t$. This number of operations is repeated at most $\omega$ times and the result follows.
\end{proof}

Observe that Algorithm \ref{alg:forwpass} needs only be called $\abs{V}$ times in order to deduce the lengths of all shortest fastest paths from any source to any destination, since it discovers all starting times from each source. Thus, about $\bigO{\abs{\remt}^2\abs{V}^3\log\abs{\remt} + \abs{V}^2\abs{E_\remt}}$ operations are required for Algorithm \ref{alg:forwpass} to produce the same output as Algorithm \ref{alg:allpairs_sfp}. The multiple-sources algorithm is thus faster when the desired output is the set of \tfdisp{} from all sources to all destinations. The temporal complexities of both methods are affected mostly by the induced graphs $G_t$. In \autoref{sec:shortpatdelays}, we will see that complexities decrease drastically on cases such as $\gamma$-paths with $\gamma>0$ since we can remove the dependency on those induced graphs.

\subsection{Shortest paths with delays}\label{sec:shortpatdelays}

In \autoref{sec:wucomp}, we want to compare Algorithm \ref{alg:forwpass} against the shortest path procedure of Wu et al. \cite{Wu2014} on the same datasets they used. The shortest path procedure of these authors is the most efficient method known to return distances in temporal networks. However, this algorithm works only on paths with delays $\gamma>0$, that is $\gamma$-paths. 

A $\gamma$-path in a link stream is a path $(t_1,u_1,u_2),\dots,(t_n,u_n,u_{n+1})$ such that $t_i\geq t_{i-1} + \gamma$ for all $1<i\leq n$ and some $\gamma\in \R_+$. We call $\gamma$ the \emph{delay} and note that the usual path corresponds to a $0$-path. When $\gamma>0$, it is not necessary to iterate over connected components, since all nodes of a component do not communicate, and we can simplify Algorithm \ref{alg:forwpass} in order to reduce its number of operations. The complexities of algorithms \ref{alg:forwpass} and \ref{alg:allpairs_sfp} are mainly influenced by the operations related to the graphs $G_t$, for each time $t$, namely: finding connected components, computing the all-pairs distances and iterating on the set of nodes at equal distances in the connected component. When $\gamma>0$, we can remove the dependency on the induced graphs $G_t$ and accelerate our methods. Thus, we present Algorithm \ref{alg:sfp_special} that is deduced from Algorithm \ref{alg:forwpass} and assumes $\gamma>0$. Its correctness and temporal complexity follow from the same arguments used in \autoref{prop:sfpcorr}.
\begin{prop}
When $\gamma>0$, Algorithm \ref{alg:sfp_special} computes the latencies and \tfdisp{} from a source event node to all reachable event nodes as well as the set of dictionaries $\dac_v$, for all $v\in V$, in at most $\bigO{\abs{V} + \abs{E_\remt}\log\abs{\remt}}$ operations. 
\end{prop}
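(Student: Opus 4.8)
The plan is to mirror the proof of \autoref{prop:sfpcorr}, exploiting the one structural fact that makes the $\gamma>0$ case easier: in a $\gamma$-path consecutive edges are used at times separated by at least $\gamma>0$, so no two distinct edges of a $\gamma$-path occur at the same instant. Consequently, in \autoref{lem:bellmanshortpat} the maximal simultaneous subpath $Q$ consists of a single edge, i.e. the connected component $C$ there may be replaced by a single pair $\set{u,v}$ with $uv\in E_t$. I would establish this first, as an observation folded into the induction: a non-empty shortest $\gamma$-path to $(t_y,y)$ has a last edge $(t_n,u_n,y)\in E_{t_n}$ with $t_n\le t_y$, whose removal (by \autoref{lem:subpref}) leaves a shortest $\gamma$-path reaching $u_n$ at an arrival time $\le t_n-\gamma$; hence $\sdis{(t_s,s)}{(t_y,y)}$ equals one plus the shortest distance to $u_n$ over arrivals at times $\le t_n-\gamma$, minimised over the edges incident to $y$ at $t_n$. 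In other words, the all-pairs-distances call on a whole component that Algorithm~\ref{alg:forwpass} performs is replaced here by a single edge relaxation.

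Next I would run the same induction on $\Delta:=\abs{\set{t_0\in\remt; t_v\ge t_0\ge t_s}}$ as in \autoref{prop:sfpcorr}, showing that after event time $t_v$ we have $d[(t_v,v)]=\fdis{(t_s,s)}{(t_v,v)}$, $f[(t_v,v)]=\tdis{(t_s,s)}{(t_v,v)}$ and $\dac_v$ correct up to $t_v$. The base case $\Delta=1$ is immediate. For the step, let $s_w$ be the largest starting time from the source to $(t_\Delta,w)$; by the induction hypothesis, for each neighbour $u$ of $w$ in $G_{t_\Delta}$ the \trstri{} recording $\sdis{(s_w,s)}{(t_0,u)}$ for the last event time $t_0\le t_\Delta-\gamma$ is already stored in $\dac_u$, and, exactly as in \autoref{prop:sfpcorr}, either $t_\Delta-s_w=\tdis{(t_s,s)}{(t_\Delta,w)}$ — in which case $t_\Delta$ is the first arrival time from $s_w$ and the final update of $d[(t_\Delta,w)]$ yields the \tfdis{} — or this latency was finalised at an earlier event time and is read off $f$. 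The edge-relaxation observation then shows the value inserted into $\dac_w$ is the correct triple $(s_w,t_\Delta,\sdis{(s_w,s)}{(t_\Delta,w)})$; the remark from \autoref{prop:sfpcorr} that the intermediate node attaining the minimum is irrelevant carries over verbatim.

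For the complexity, initialising the dictionaries and the output maps indexed by $V$ costs $\bigO{\abs{V}}$. At each event time $t$ the algorithm computes neither connected components nor all-pairs distances; it iterates only over the $m_t:=\abs{E_t}$ edges present at $t$, and per edge it performs a constant number of dictionary operations — retrieving the latest key and its last pair, and inserting one pair — each costing $\bigO{\log\abs{\remt}}$, since every list $\dac_v[s]$ and every key set has size at most $\abs{\remt}$ (one entry per arrival time). Hence time $t$ costs $\bigO{m_t\log\abs{\remt}}$, and summing over $t\in\remt$ with $\sum_{t\in\remt}m_t=\abs{E_\remt}$ gives the claimed $\bigO{\abs{V}+\abs{E_\remt}\log\abs{\remt}}$.

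The main obstacle I expect is the first step: being precise that $\gamma>0$ genuinely lets one collapse the component-wide recursion of \autoref{lem:bellmanshortpat} to a single per-edge relaxation, including reconciling the ``arrivals at times $\le t-\gamma$'' bookkeeping with the outer distances $\sdis{(t_s,s)}{(t^-,\cdot)}$ used for Algorithm~\ref{alg:forwpass}; in the discrete setting these agree only after one observes that no usable arrival can lie in the open interval $(t-\gamma,t)$. Once that is pinned down, both the correctness induction and the complexity count are routine restrictions of the arguments already given for \autoref{prop:sfpcorr}.
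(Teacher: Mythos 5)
Your proposal is correct and follows essentially the same route as the paper, whose proof of this proposition is simply the one-line remark that it follows from the reasoning of \autoref{prop:sfpcorr}; you have fleshed out exactly the adaptation the surrounding text describes, namely that $\gamma>0$ forbids two edges of a path at the same instant, so the component-wide all-pairs step collapses to a single edge relaxation and the induction and the $\bigO{\abs{V} + \abs{E_\remt}\log\abs{\remt}}$ count carry over. Your version is in fact more explicit than the paper's.
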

\begin{proof}
This follows from the same reasoning as in \autoref{prop:sfpcorr}. 
\end{proof}
Finally, in Algorithm \ref{alg:sfp_special}, the dictionaries $d$ and $f$ are implemented such that the keys are nodes and values are pairs $(t,k)$ such that $t$ is the time value $k$ is computed at that node. For example, if $(t,f_v)\in f[v]$, then the latency from the source to $(t,v)$ is $f_v$. This enables us to sort dictionaries by time. The same work could be done for Algorithm \ref{alg:allpairs_sfp}, that is to adapt it for the case $\gamma>0$, although that was not the focus here.

\section{Experiments}\label{sec:exps}

We present some experiments to highlight the running times of Algorithms \ref{alg:forwpass} and \ref{alg:allpairs_sfp}. In the first one, we compare Algorithm \ref{alg:sfp_special} with the  single-source shortest path method from Wu et al. \cite{Wu2014}. Algorithm \ref{alg:sfp_special} acts as a surrogate for Algorithm \ref{alg:forwpass}. Although Algorithm \ref{alg:allpairs_sfp} should be more efficient than Algorithm \ref{alg:forwpass} when the goal is to compute values between all pairs of temporal nodes, Wu et al. evaluated their method from a small set of source nodes on large datasets. It would be infeasible at this point to evaluate both our methods on the same datasets between all pairs of temporal nodes. In a second experiment, we compared the running times of our two methods on synthetic link streams. 

Algorithm \ref{alg:allpairs_sfp} was inspired by Xuan et al.'s fastest paths method that does not return distances. Comparing the two methods would be unfair against ours.

All experiments were run on a single machine with $2.6$ GHz Intel Core i7 processor and $16$ Gb of RAM. All methods were implemented in \textrm{C++} with standard libraries, including Wu et al.'s method. We implemented standard approaches to compute connected components and all pairs distances in graphs.

\subsection{Runtime comparison with the literature}\label{sec:wucomp}

We presented Algorithm \ref{alg:sfp_special} in \autoref{sec:shortpatdelays} that was motivated by a similar method developed by Wu et al. \cite{Wu2014}. We now compare how Algorithm \ref{alg:sfp_special} fares against their algorithm. Since we are not aware of methods comparable to Algorithms \ref{alg:forwpass} and \ref{alg:allpairs_sfp}, this is our comparison with the literature.

Wu et al. analyzed their method with the framework of temporal graphs and deduce a temporal complexity that is hard to compare with ours. We translate their result with link stream parameters, upper bounding $M$ with $\abs{E_{\remt}}$ and $d_{\max}$ with $\abs{\remt}$. Thus, the shortest path algorithm of Wu et al. makes at most $\bigO{\abs{V} + \abs{E_{\remt}}\log\abs{\remt}}$ operations in the worst case. The worst-case temporal complexities of both algorithms are thus the same.


We ran experiments on link streams of various sizes, as measured with $\abs{V}$, $\abs{\remt}$ and $\abs{E_{\remt}}$. We used the same datasets as Wu et al.\footnotemark, randomly chose $100$ different nodes from each and ran both methods one after the other. The full results (in seconds) can be found in \autoref{tab:runtime_comp_eff_wu}. The running times of Wu et al.'s method are either comparable or significantly less than that of Algorithm \ref{alg:sfp_special}. However, our method does more operations, since it must compute latencies as well and ensure the distances correspond to the \tfdisp{}. Thus, the running times of Wu et al.'s procedure are presented for reference only, it should not be expected that our methods would be faster. All datasets are heterogenous, which explains the variability in running times and we have not yet pinpointed any hidden link stream parameter that might explain this variability. The dictionaries $\dac_v$ are sensitive to the number of arrival times from the source and we suspect that in the problematic datasets some nodes must have a really high number of arrival times. This would make it more difficult to search values in some dictionary $\dac_v$.

\footnotetext{
	The datasets are only used as benchmarks. They all describe discrete temporal networks and can be found as part of the KONECT library of networks \cite{kunegis2013konect}. Only the values of the parameters $\abs{V}, \abs{\remt}$ and $\abs{E_\remt}$ were extracted since only these were required for our experiments.
}

\begin{table}[ht]
\centering
\resizebox{\linewidth}{!}{
\begin{tabular}{llll|rrr}
  \toprule
Dataset & $\abs{V}$ & $\abs{\remt}$ & $\abs{E_\remt}$ & Wu et al. (s)& $\ssmdg$ (s)& ratio \\ 
  \midrule
arxiv & 28093 & 2337 & 4596803 & 1.30 & 170.00 & 130.77 \\ 
  digg & 30398 & 9125 & 87627 & 1.60 & 1.10 & 0.69 \\ 
  elec & 7118 & 90741 & 103675 & 0.71 & 2.90 & 4.08 \\ 
  enron & 87273 & 178721 & 1148072 & 5.20 & 85.00 & 16.35 \\ 
  epinions & 755760 & 501 & 13668320 & 41.00 & 40.00 & 0.98 \\ 
  facebook & 63731 & 204914 & 817035 & 10.00 & 8.90 & 0.89 \\ 
  flickr & 2302925 & 134 & 33140017 & 120.00 & 3700.00 & 30.83 \\ 
  slashdot & 51083 & 67327 & 140778 & 4.80 & 4.30 & 0.90 \\ 
  wikiconflict & 116836 & 215982 & 2917785 & 6.90 & 21.00 & 3.04 \\ 
  wiki & 1870709 & 2198 & 39953145 & 100.00 & 22000.00 & 220.00 \\ 
  youtube & 3223585 & 203 & 9375374 & 170.00 & 160.00 & 0.94 \\ 
   \bottomrule
\end{tabular}
}
\caption{Comparisons between Algorithms \ref{alg:sfp_special} and \cite{Wu2014}} 
\label{tab:runtime_comp_eff_wu}
\end{table}

\subsection{Comparison between algorithms \ref{alg:forwpass} and \ref{alg:allpairs_sfp}}

Algorithm \ref{alg:forwpass} and \ref{alg:allpairs_sfp} were run on a set of randomly generated link streams of size $\abs{V}$ ranging from $100$ to $165$, with increments of $5$, and repeated $5$ times. Although the link streams are small in scale, the running times are significant since we compute the distances from every source to every destination. The link streams were constructed by generating Erdös-Renyi graphs $G(n,p)$, with $n=\abs{V}$ and $p=0.7$. Then, on each edge $(u,v)$, we drew a time instant $t \in \set{0,1,\dots,7}$ uniformly at random and added both directed edges $(t,u,v)$ and $(t,v,u)$ to $E$. In this case, edges have no duration and the time instants are integers: this helps ensure the size of $\remt$ is fixed and small, so the running times scale only with $\abs{V}$. 

\autoref{fig:runcomp_effap} presents the results of this comparison. We observe that, as the number of nodes involved increases, the amount of time taken by Algorithm \ref{alg:forwpass} grows faster than that of Algorithm \ref{alg:allpairs_sfp}. This gives clear indication that this method is faster than Algorithm \ref{alg:forwpass}. \autoref{tab:runtime_comp} shows the mean running times (over all repetitions of the same experiment) of each algorithms on a link stream with a fixed number of nodes. In terms of scale, the MSMD method manages a link stream of $160$ nodes and about $18000$ edges (the size of $E_{\remt}$ is an average over all repetitions) in, on average, less than $50$ seconds. Its counterpart takes more than $15$ minutes for the same calculations. 
\begin{figure}
\includegraphics[width = \linewidth]{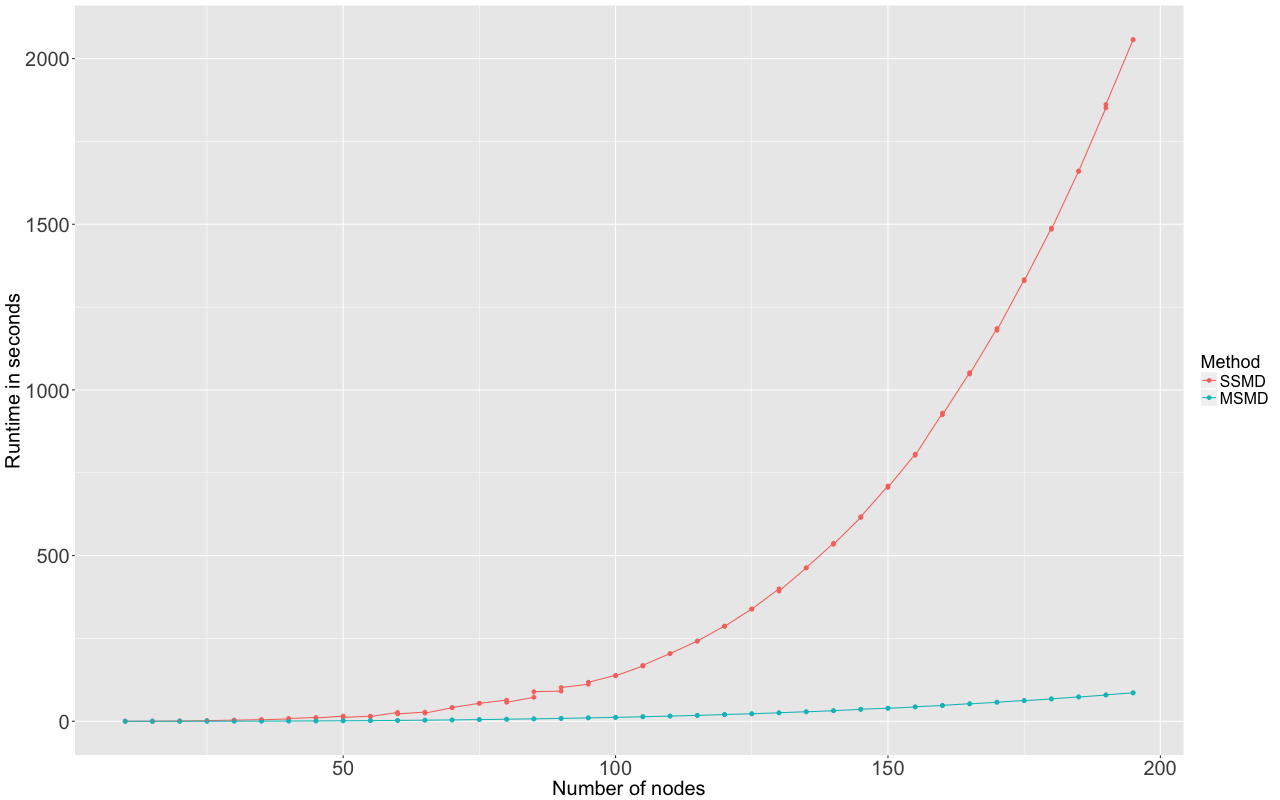}
\caption{Runtime comparison between Algorithms \ref{alg:forwpass} (SSMD) and \ref{alg:allpairs_sfp} (MSMD) on synthetic link streams (runtime in seconds vs number of nodes)}
\label{fig:runcomp_effap}
\end{figure}

Since Algorithm \ref{alg:allpairs_sfp} is more scalable than Algorithm \ref{alg:forwpass}, we generated a new set of link streams, again with the same process as before, although the time instants are now drawn uniformly at random in the interval $[0,10]$ while the duration of an edge $(t,uv)$ is drawn uniformly at random in the interval $[0,10 - t]$. Since $\remt$ grows on each generation, we kept $\abs{V}$ lower than in the former experiment and let $\abs{V} \in \set{10, 12, \dots, 68,70}$. The results are presented in the upper part of \autoref{tab:runtime_pred}, above the horizontal line with $\abs{V}$ up to $70$. We fitted, with the statistical software $\mathrm{R}$ \cite{Rsoftw}, a linear model on the runtime of Algorithm \ref{alg:allpairs_sfp} as function of both $\abs{V}$ and $\abs{\remt}$ in order to extrapolate the runtime of this method for larger values of $\abs{V}$ and $\abs{\remt}$. The fit is reasonable but imperfect, although this is sufficient to illustrate the scaling trend. Extrapolating, we obtain the values below the horizontal line. We observe that with around $190$ nodes and $12000$ event times, Algorithm \ref{alg:allpairs_sfp} should already take more than a day to finish. This suggests scalability might be an issue as we could not tackle a real-world dataset even with this long amount of time.

\begin{table}
\begin{subfigure}[ht]{0.45\linewidth}
\centering
\resizebox{\linewidth}{!}{
\begin{tabular}{ll|rr}
  \toprule
$\abs{V}$ & $\abs{E_\remt}$ & SSMD (s)& MSMD (s)\\ 
  \midrule
100 & 6942.40 & 142.36 & 11.74 \\ 
  105 & 7656.40 & 173.02 & 13.58 \\ 
  110 & 8394.00 & 208.28 & 15.47 \\ 
  115 & 9173.60 & 248.52 & 17.65 \\ 
  120 & 10005.60 & 293.48 & 20.01 \\ 
  125 & 10835.60 & 354.76 & 22.41 \\ 
  130 & 11723.20 & 404.31 & 25.23 \\ 
  135 & 12654.00 & 470.48 & 28.19 \\ 
  140 & 13601.60 & 547.13 & 31.50 \\ 
  145 & 14583.20 & 628.99 & 34.84 \\ 
  150 & 15609.20 & 718.28 & 38.40 \\ 
  155 & 16675.20 & 824.66 & 42.47 \\ 
  160 & 17794.80 & 946.35 & 46.74 \\ 
  165 & 18915.20 & 1107.04 & 52.12 \\ 
   \bottomrule
\end{tabular}
}
\caption{Comparisons between algorithms \ref{alg:forwpass} and \ref{alg:allpairs_sfp}} 
\label{tab:runtime_comp}
\end{subfigure}
\hfill
\begin{subfigure}[ht]{0.45\linewidth}
\centering
\resizebox{\linewidth}{!}{
\begin{tabular}{lll|r}
  \toprule
$\abs{V}$ & $\abs{\remt}$ & $\abs{E_\remt}$ & Runtime (s)\\ 
  \midrule
10 & 29 & 58 & 0.06 \\ 
  20 & 134 & 268 & 1.75 \\ 
  30 & 313 & 626 & 15.08 \\ 
  40 & 550 & 1100 & 64.20 \\ 
  50 & 857 & 1714 & 203.79 \\ 
  60 & 1225 & 2450 & 557.06 \\ 
  70 & 1670 & 3340 & 1248.19 \\ 
   \midrule
80 & 2177 & 4354 & 2379.12 \\ 
  100 & 3391 & 6783 & 6635.30 \\ 
  120 & 4872 & 9745 & 14814.50 \\ 
  140 & 6620 & 13241 & 28717.40 \\ 
  160 & 8635 & 17271 & 50469.20 \\ 
  180 & 10917 & 21835 & 82519.61 \\ 
  200 & 13466 & 26932 & 127642.88 \\ 
   \bottomrule
\end{tabular}
}

\caption{Runtimes of Algorithm \ref{alg:allpairs_sfp}} 
\label{tab:runtime_pred}
\end{subfigure}
\caption{Runtimes (in seconds) of Algorithms \ref{alg:forwpass} and \ref{alg:allpairs_sfp}}
\end{table}

\begin{algorithm}
\DontPrintSemicolon
\KwIn{$L=(T,V,E)$ a link stream, $\remt$ the set of event times, $(t_s,s)$ a source event node}
\KwOut{Dictionaries $d,f$ of \tfdisp{} and latencies from $(t_s,s)$ to all other event nodes, set of dictionaries $\dac_v$ for each $v\in V$}
$f, d \gets \mbox{create dictionaries}$\\
\lFor{$v\in V$}{
	$\dac_v \gets \mbox{create dictionary}$
}
\For{$t\in \mathrm{Sorted}(\set{t_0\in \remt; t_0\geq t_s})$}{ \label{line:timeloop}
	\For{$C\in \mathrm{connected\_components}(G_t)$}{\label{line:sfpd_eff_loop_conn}
		$H \gets G_t.\mathrm{induced\_subgraph}(C)$\\
		$d' \gets \mathrm{all\_pairs\_distances}(H)$\label{line:apdist_sfpeff}\\
		$D \gets \set{}$\\
		\lIf{$s\in C$}{
			$D.\mathrm{insert}(-t,0,s)$
		}
		\Else{
			$s_v \gets \max_{u\in C}\dac_u.\mathrm{last}()$\\
			\For{$v\in C $}{
				$D.\mbox{insert all $(-s_v,d_v,v)$ such that}$\\ 
				$(a_v,d_v)\in \dac_v[s_v]$ \mbox{for some $a_v$}
			}
		}
		\For{$(s_u,d_u,u) \in \mathrm{Sorted}(D)$}{\label{line:sfpd_eff_while}
			$U\gets \set{v\in C;\exists a_v : (a_v,d_u)\in \dac_v[-s_u]}$\\
			\lIf{$u = s$}{
				$U \gets \set{s}$
			}
			\For{$w\in C$}{
				$(\_,d_*)\gets \dac_w[s_u].\mathrm{last}()$\label{line:mindisteff}\\
				$d_{\min} \gets \min(d_u + \min_{u\in U}d'[(u,w)], d_*)$\\
				$\dac_w[-s_u].\mbox{remove all }(t,d_0)$ s.t. 
				$d_0>d_{\min}$\\
				$\dac_w[-s_u].\mathrm{insert}(t,d_{\min})$\label{line:dacinserteff}\\
				$f_w^*\gets \min_{(t_0,w)\in f}f[(t_0,w)]$\\
				$f[(t,w)] \gets \min(t+s_u,f_w^*)$\label{line:updlateff}\\
				$d[(t,w)] \gets$ $\min$ $d_0$ s.t. $s_0\in R_w$, $(a_0,d_0)\in R_w[s_0]$ and $a_0-s_0=f[(t,w)]$\label{line:updiseff}
			}
		}
	}
}
\Return{
	$d,f,\set{\dac_v; v\in V}$
}
\caption{SSMD \tfdis{}}
\label{alg:forwpass}
\end{algorithm}
\begin{algorithm}
\DontPrintSemicolon
\KwIn{$L=(T,V,E)$ a link stream, $\remt$ the set of event times}
\KwOut{$F$ a dictionary of latencies, $D^0$ a dictionary of \trstrip{}, $D$ a dictionary of \tfdisp{}}
\lFor{$u,v\in V$}{
	$SA_{uv}, F_{uv},D_{uv}, D^0_{uv}\gets$ create sorted dictionaries
}
\For{$t\in \remt$}{
	$t^- \gets $ last time of $\remt$ before $t$\\
	\For{$C\in \mathrm{connected\_components}(G_{t})$}{
		$H \gets G_{t}.\mathrm{induced\_subgraph}(C)$\\
		$d_C \gets \mathrm{all\_pairs\_distances}(H)$\\

		\For{$u,v\in C$}{
			$SA_{uv}[t].\mathrm{insert}(t,t)$\\
			$D^0_{uv}[t].\mathrm{insert}(t,t,d_C[u,v])$\\
			$F_{uv}[t].\mathrm{insert}(0)$
		}
		\For{$u\in C, v\in V\setminus C$}{\label{line:forloopap}
			$C_v \gets$ conn. component of $G_t$ containing $v$\\
			$s_v \gets \max_{w\in C_v, (s,a)\in SA_{uv}[t^-]}(s)$\\
			$SA_{uv}[t].\mathrm{insert}(s_v,t)$\\
			$SA_{uv}[t].\mathrm{insert}
			(\set{(s_v,a) \in SA_{uv}[t^-]})$\label{line:megainsert}\\
			$d_{\min}\gets \infty$\\
			\For{$w\in C_v$,}{
				$(\_,\_,d_w) \gets D^0_{uv}[t^-].\mathrm{last}()$\\
				$d_{\min} \gets \min(d_{\min}, d_w + d_C[w,v])$
			}
			$D^0_{uv}[t].\mathrm{insert}(s_v,t,d_{\min})$\\
			$l_{uv} \gets \min_{(s,a)\in SA_{uv}[t]}(a-s)$\\
			$l = \min(l_{uv}, F_{uv}[t^-])$\\
			$F_{uv}[t] \gets l$\\
			$(s^*, a^*) \gets $ pair $(s,a)\in SA_{uv}[t]$ s.t. $a - s = l$\\
			$D_{uv}[t][s^*] \gets d^*$ s.t. $(s^*,a^*,d^*)\in D^0_{uv}[t]$
		}
	}
}
\Return{$F,D^0,D$}
\caption{MSMD \tfdis{}}
\label{alg:allpairs_sfp}
\end{algorithm}

\begin{algorithm}
\DontPrintSemicolon
\KwIn{$L=(T,V,E)$ a link stream, $\remt$ the set of event times, $(t_s,s)$ a source event node}
\KwOut{Dictionaries $d,f$ of \tfdisp{} and latencies from $(t_s,s)$ to all other event nodes, set of dictionaries $\set{\dac_v; v\in V}$}
$d, f, \gets $ create dictionaries\\
\lFor{$v\in V$}{
	$\dac_v \gets$ create dictionary
}
\For{$(t,x,y)\in \mathrm{Sorted}(E)$ s.t. $t\geq t_s$}{
	\lIf{$u=s$}{
		$\dac_s[t].\mathrm{insert}(t,0)$
	}
	\If{$\dac_u\neq \emptyset$}{
		$s_u \gets \dac_u.\mathrm{last}()$\\
		$(a_u,d_u) \gets \dac_u[s_u].\mathrm{last}()$\\
		\If{$s_u$ exists}{
			$d_v \gets d_u + 1$\\
			\lIf{$\dac_v[s_u]$ does not contain $(t',d')$ s.t. $t'\leq t+\gamma$ and $d'<d_v$}{
				$\dac_v[s_u].\mathrm{insert}(t+\gamma,d_v)$
			}
		}
		$f_v \gets t - s_u$\\
		\If{$f[v] \neq \emptyset$}{
			$(\_, f_v') \gets f[v].\mathrm{last}()$\\
			\lIf{$f_v' < f_v$}{
				$f_v \gets f_v'$
			}
		}
		$f[v].\mathrm{add}(t,f_v)$\\
		$d_{\mathrm{fas}} \gets $ $\min$ $d_0$ s.t. $(a_0,d_0)\in R_v[s_u]$ and $a_0-s_u=f_v$\\
		\lIf{$d_{\mathrm{fas}}$ exists}{
			$d[v].\mathrm{add}(t, d_{\mathrm{fas}})$
		}
	}
}
\Return{
	$d,f,\set{\dac_v; v\in V}$
}
\caption{SSMD \tfdis{} with $\gamma>0$}
\label{alg:sfp_special}
\end{algorithm}

\section{Conclusion}\label{sec:conc}

In this paper, we presented three algorithms to compute metrics between pairs of event nodes. As opposed to similar known algorithms, those methods return all metrics at once in a single pass over the dataset. Moreover, the starting and arrival times of (some) shortest paths are returned, which is valuable information to compute, for example, the betweenness centrality of temporal nodes.

Algorithm \ref{alg:forwpass} works from a fixed source and is suitable when not all pairwise functions are required. Our experiments show that Algorithm \ref{alg:sfp_special}, and by extension Algorithm \ref{alg:forwpass}, is in general slower than the state of the art method to compute distances from a source node to all other nodes. However this is expected as it has to make more operations and work with bigger data structures. We did note some odd behaviour when comparing this method with the literature in that the ratio of running times between our method and Wu et al.'s method does not vary smoothly with known quantities. This should be inspected further if we would like to speed up the computation time of this method. Nevertheless, the focus of this study was to compute all metrics at once most efficiently, not to beat the state of the art distance method.

In practice, Algorithm \ref{alg:allpairs_sfp} has proved to finish its task faster than its counterpart on synthetic link streams. Since the link streams used were smaller than what we would expect from real-world instances, we extrapolated the running times produced by Algorithm \ref{alg:allpairs_sfp}. At this point, scalability is an issue and we could not expect to run this method on realistic link streams and obtain results in a reasonable amount the time. Thus, in order to speed up the computation time, we suggest studying how to lessen the amount of operations in either methods by skipping some temporal nodes and extrapolating the distances. Also, finding ways not to have to recompute the connected components and the all-pairs distances at every time would also be helpful in improving both methods. 

We believe the methods can be easily modified to compute other types of paths combining temporal and structural information, such as shortest foremost paths. In turn, those paths can be used to compute other centralities than the betweenness centrality or to investigate different topics such as reachability.

\bibliographystyle{plain}
\bibliography{ms}

\end{document}